\documentclass[english,10pt,letters]{IEEEtran}
\usepackage{amsthm,amssymb,graphicx,graphicx,multirow,amsmath,color}
\usepackage{verbatim}
\usepackage{array}
\usepackage{caption}

\def\E{\mathbb{E}}

\def\P{\mathbb{P}}

\def\ie{{\em i.e.}}

\def\R{\mathbb{R}}


\def\d{\mathrm{d}}

\def\sir{\mathtt{SIR}}

\def\pdf{\mathtt{PDF}}

\def\CCDF{\mathtt{CCDF}}
\def\mgf{\mathtt{MGF}}

{}
  
  \newtheorem{theorem}{Theorem}

\begin{document}
\title{Coverage Analysis In Downlink Poisson Cellular Network With $\kappa$-$\mu$ Shadowed Fading}

\author{\IEEEauthorblockN{Sudharsan Parthasarathy and  Radha Krishna Ganti}\\
\IEEEauthorblockA{Department of Electrical Engineering\\
Indian Institute of Technology Madras\\
Chennai, India 600036\\
\{sudharsan.p, rganti\}@ee.iitm.ac.in}
}

\maketitle
\begin{abstract}
The downlink coverage probability of a cellular network, when the base station locations are modelled by a Poisson point process (PPP), is known when the desired channel is Nakagami distributed with an integer shape parameter. However, for many interesting fading distributions such as Rician, Rician shadowing, $\kappa$-$\mu$, $\eta$-$\mu$, etc., the coverage probability is unknown. $\kappa$-$\mu$ shadowed fading is a generic fading distribution whose special cases are many of these popular distributions known so far. In this letter, we derive the coverage probability when the desired channel experiences $\kappa$-$\mu$ shadowed fading. Using numerical simulations, we verify our analytical expressions.
\end{abstract}

\section{Introduction}
%
%
%
%
The downlink coverage probability of a single-tier cellular network with distance-dependent interference was analyzed  in \cite{andrews2011tractable}  using tools from stochastic geometry. This was followed by new results for multi-tier cellular networks in single antenna \cite{jo2012heterogeneous} and multi-antenna systems~\cite{dhillon2013downlink}. An important assumption in these works is that the fading distribution of the nearest (desired) base station (BS) is Rayleigh. 
 Rayleigh fading is an important assumption as the channel power will follow exponential distribution. This allows  the distribution of signal-to-interference ratio ($\sir$) to be expressed in terms of the Laplace transform of interference which can easily be computed using standard tools from stochastic geometry. In a multi-antenna system that uses maximal-ratio combining, if the fading distribution of all the links are i.i.d. Rayleigh distributed, the channel power is Gamma distributed with integer shape parameter (equal to the number of antenna terminals). The coverage probability of such a system can be computed by using Laplace transform of interference and its derivative. 

However, for popular  channel fading distribution like Rician, the Laplace trick cannot be used as the complementary cumulative distribution function ($\CCDF$)  of the channel power is not an exponential function. In~\cite{yang2015coverage}, the coverage probability of a two-tier cellular network was obtained when the desired signal experiences Rician fading. This approach assumes that the  $\CCDF$ of a  non-central chi-squared distributed (square of Rician) random variable, can be approximated as a weighted sum of exponentials. The weights and abscissas are obtained by minimizing the mean squared error between the $\CCDF$ and this approximation. As the function minimized is not convex, the weights and abscissas obtained are locally optimal and are highly dependent on the initial points assumed.  Also, there are no closed-form expressions available for these weights and abscissas and have to be computed numerically. In \cite{di2014stochastic}, the coverage probability of a single-tier network with an arbitrary fading distribution was derived using Gil-Pelaez inversion theorem. The coverage probability expression in \cite{di2014stochastic}  requires a numerical integration of the imaginary part of the moment generating function ($\mgf$) of the desired channel's power. However, this approach requires numerical evaluation of multi-dimensional integrals for heterogeneous networks. In \cite{madhu}, it is shown that analytically tractable expressions for coverage probability of a heterogeneous cellular network can not be derived in presence of arbitrary fading if the tier association is based on maximum average received power. In \cite{di2013average}, \cite{alammouri2016modeling} average rate was derived for arbitrary fading channels and fading channels with dominant specular components respectively. But these approaches can not be used for evaluating the coverage probability. 

In this letter, we assume all the channels to experience independent $\kappa$-$\mu$ shadowed fading and derive the exact coverage probability when the parameter $\mu$ of the desired channel is an integer and  use ``Rician approximation" to derive an approximate one when $\mu$ is not an integer.  So using this method, the coverage probability can be obtained if the desired channel is Nakagami faded with non-integer shape parameter whereas the  Laplace trick is useful only when the shape parameter is an integer. As the popular fading distributions such as Rician, Rayleigh, Nakagami, Rician shadowing, $\kappa$-$\mu$, $\eta$-$\mu$ are special cases of $\kappa$-$\mu$ shadowed fading, the  coverage probability expression obtained is generic.  Our analysis assumes that the single tier base stations are PPP distributed and the interfering signals fade independently and identically. The analysis can be easily extended to a multi tier heterogeneous network that uses maximum average received power based association following similar steps as in \cite{jo2012heterogeneous}.

\section{System Model}
\label{sec:systemmodel}

The base stations are modelled by a homogeneous Poisson point process $\Phi $ $\subseteq$ $\R^2$ of intensity $\lambda$. All the base stations are assumed to transmit with unit power. The signal from a base station located at $x \in \mathbb{R}^2,$ experiences a path loss $||x||^{-\alpha}$, where $\alpha > 2.$  Without loss of generality, a typical user is assumed to be at the origin and is associated with the nearest base station located at a distance $r$. Nearest base station association in a single tier network is same as the maximum average received power based association  \cite{jo2012heterogeneous}. This is preferred to the highest $\sir$ based association so that frequent handovers that occurs due to short term fading and shadowing can be avoided \cite{jo2012heterogeneous}.  From \cite{andrews2011tractable}, the nearest neighbour distance is Rayleigh distributed, \ie, $f(r)=2 \pi \lambda r \exp(-\pi \lambda r^2)$. The system is assumed to be interference limited and hence noise is neglected.

\section{$\kappa$-$\mu$ shadowed fading}
\label{sec:kappamu}
$\kappa$-$\mu$ shadowed fading is represented by three parameters viz. $\kappa$, $\mu$ and $m$. Let $\gamma$ denote the signal power. The probability density function of the signal power when the channel experiences $\kappa$-$\mu$ shadowed fading \cite{paris2014statistical} is denoted by $f(\gamma)$ and is given as  $\frac{\mu^\mu m^m (1+\kappa)^\mu \gamma^{\mu-1}}{\Gamma(\mu) \overline{\gamma}^{\mu} (\mu \kappa+m)^m}  e^{-\frac{\mu(1+\kappa) \gamma}{\overline{\gamma}}} {}_1F_1 \left(m;\mu;\frac{\mu^2 \kappa(1+\kappa)}{\mu \kappa+m} \frac{\gamma}{\overline{\gamma}}\right),$ where ${}_1F_1(a;b;z)\overset{\Delta}{=} \sum\limits_{l=0}^\infty \frac{(a)_l}{(b)_l} \frac{z^l}{l!},$  is the confluent hypergeometric function, $(a)_l=\frac{\Gamma(a+l)}{\Gamma(a)}$. The $\pdf$ can be expressed as $f(\gamma) =\sum\limits_{l=0}^{\infty} w_l \frac{e^{-c \gamma} \gamma^{l+\mu-1} c^{l+\mu}}{\Gamma(l+\mu)},$ where $c=\frac{\mu(1+\kappa)}{\overline{\gamma}} $ and
\begin{equation}
w_l=\frac{\Gamma(l+\mu) (m)_l (\frac{\mu \kappa}{ \mu \kappa +m})^l (\frac{m}{\mu \kappa+m})^m}{\Gamma(\mu) l ! (\mu)_l}.
\label{eqn:weights}
\end{equation}
So the $\pdf$ of channel power $f(\gamma)$ can be represented as an infinite  sum of Gamma densities with parameters $(l+\mu,\frac{1}{c})$ and weights $w_l$. The relation between different fading distributions and $\kappa$-$\mu$ shadowed fading are given in \cite{paris2014statistical}, \cite{moreno2015kappa}.  The parameter $m$ in Nakagami-m fading is denoted as $\hat{m}$ to avoid confusion with parameter $m$ in $\kappa$-$\mu$ shadowed fading. Let the channel power of the desired signal $g_0$ be $\kappa$-$\mu$ shadow faded with parameters $\kappa_0$, $\mu_0$, $m_0.$  The interfering signals are independent of each other and the desired signal. All the interfering signals fade identically, but need not be identical to the desired signal. Let the interfering signals be $\kappa$-$\mu$ shadow faded with parameters $\kappa_i$, $\mu_i$, $m_i.$ So the $\pdf$ of the desired channel power $g_0$ is $f(g_0) = \sum\limits_{l=0}^{\infty} w_l \frac{e^{-c_0 g_0} g_0^{l+\mu_0-1} c_0^{l+\mu_0}}{\Gamma(l+\mu_0)}.$ Similarly the  $\pdf$ of  the interference power $g_i$ is $f(g_i) = \sum\limits_{q=0}^{\infty} v_q \frac{e^{-c_i g_i} g_i^{q+\mu_i-1} c_i^{q+\mu_i}}{\Gamma(q+\mu_i)}.$ In the subsequent Section, coverage probability is derived.

\section{Coverage Probability}\label{sec:hetero}
 
The signal to interference ratio of a typical user at distance $r$ from its associated base station is $\mathtt{SIR}$ = $\frac{ g_{0} r^{-\alpha}}{I },$ where $I$=$\sum\limits_{i \in \Phi \setminus B_{0}}   g_{i} |x_{i}|^{-\alpha}$ and $B_{0}$ is the base station that the typical user is associated with. Here, $g_{i}$ is the channel power from the $i$-th base station to the typical user. The coverage probability of a typical user is 
\begin{equation}
P_{c} = \P(\mathtt{SIR}>T)  = \int\limits_{0}^\infty \P(\mathtt{SIR} > T \vert r) 2 \pi \lambda r e^{-\pi \lambda r^2} \d r, \label{eqn:SIRccdf}
\end{equation}

as the distance to the nearest base station is Rayleigh distributed. First we will  derive the exact coverage probability expression when $\mu_0$ is an integer and then derive the approximate one when $\mu_0$ is not an integer. 

\subsection{Integer $\mu_0$}

Rayleigh, Rician, Rician shadowed, Hoyt, $\kappa$-$\mu$, Nakagami (integer shape parameter) are special cases of $\kappa$-$\mu$ shadowed fading where $\mu$ is an integer \cite{paris2014statistical}. In the following theorem we derive the coverage probability when $\mu_0$ is an integer.

\begin{theorem}
If $\mu_0$ is an integer, then coverage probability ($P_c$) is

\begin{equation}
\sum\limits_{l=0}^{\infty}  \sum\limits_{n=0}^{l+\mu_0-1}  \frac{\partial^n}{\partial s^n}  \frac{w_l (-1)^n }{ n! \sum\limits_{q=0}^{\infty} v_q {}_2F_1(q+\mu_i,-\frac{2}{\alpha},1-\frac{2}{\alpha},-\frac{s T c_0}{c_i})} |_{\footnotesize{s=1}}, 
\label{eqn:Pcexact}
\end{equation}
where ${}_2F_1()$ is the Gauss-Hypergeometric function.
\end{theorem}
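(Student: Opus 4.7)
The plan is to peel off the randomness in the desired channel $g_0$ via its Gamma mixture representation, handle the Gamma tail with an integer shape parameter using the standard polynomial-exponential CCDF, rewrite that CCDF as a parametric derivative so that the expectation over the interference reduces to derivatives of a Laplace transform, and finally evaluate that Laplace transform via the PGFL of the PPP.

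First, starting from \eqref{eqn:SIRccdf} and using the mixture decomposition
$f(g_0) = \sum_{l\ge 0} w_l \, \mathrm{Gamma}(l+\mu_0, 1/c_0)$, I would condition on the mixture component and write
\[
\P(\sinr>T\mid r) = \sum_{l=0}^{\infty} w_l \,\P\bigl(X_l > T r^\alpha I \mid r\bigr),
\]
where $X_l \sim \mathrm{Gamma}(l+\mu_0, 1/c_0)$. Because $l+\mu_0$ is a positive integer, its CCDF is $\P(X_l>y) = \sum_{n=0}^{l+\mu_0-1} \frac{(c_0 y)^n}{n!}e^{-c_0 y}$. Substituting $y=Tr^\alpha I$ and taking expectation in $I$ gives a sum of terms of the form $\E[I^n e^{-c_0 T r^\alpha I}\mid r]$.

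Next I would use the identity $\tfrac{y^n}{n!}e^{-y}=\tfrac{(-1)^n}{n!}\partial_s^n e^{-sy}\big|_{s=1}$ with $y=c_0 T r^\alpha I$, which lets me pull the expectation inside the derivative and obtain
\[
\E\!\left[\tfrac{(c_0 T r^\alpha I)^n}{n!}e^{-c_0 T r^\alpha I}\,\Big|\,r\right] = \frac{(-1)^n}{n!}\,\partial_s^n \,\calL_I\!\bigl(s c_0 T r^\alpha \mid r\bigr)\Big|_{s=1}.
\]
By the PGFL of the PPP (with the nearest interferer beyond $r$) and the Gamma-mixture representation of $g_i$, the conditional Laplace transform is
\[
\calL_I(t\mid r)=\exp\!\left(-2\pi\lambda\sum_{q\ge 0}v_q\int_r^\infty\!\!\bigl[1-(1+t x^{-\alpha}/c_i)^{-(q+\mu_i)}\bigr]x\,\d x\right),
\]
because the mixture weights $v_q$ sum to one. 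Setting $t=s c_0 T r^\alpha$ and substituting $u=x/r$ eliminates the $r$-dependence inside the integrand, so the exponent becomes $-\pi\lambda r^2 \sum_q v_q [{}_2F_1(q+\mu_i,-2/\alpha,1-2/\alpha,-sTc_0/c_i)-1]$ after invoking the standard evaluation $\int_1^\infty[1-(1+\beta u^{-\alpha})^{-k}]\,2u\,\d u = {}_2F_1(k,-2/\alpha,1-2/\alpha,-\beta)-1$. The outer integral $\int_0^\infty 2\pi\lambda r\, e^{-\pi\lambda r^2}\,e^{-\pi\lambda r^2 C(s)}\,\d r = 1/(1+C(s))$ then collapses to
\[
A(s) \;:=\; \frac{1}{\sum_{q\ge 0} v_q\, {}_2F_1\!\bigl(q+\mu_i,\,-\tfrac{2}{\alpha},\,1-\tfrac{2}{\alpha},\,-\tfrac{sTc_0}{c_i}\bigr)}.
\]
Passing $\partial_s^n|_{s=1}$ through the $r$-integral (justified by dominated convergence and smoothness of the integrand in $s$) and summing in $l,n$ yields exactly \eqref{eqn:Pcexact}.

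The routine parts are the Gamma CCDF step and the PGFL computation. The main technical obstacles are (i) recognizing and justifying the $\partial_s^n|_{s=1}$ trick that converts $\E[I^n e^{-sI}]$ into differentiations of the Laplace transform, (ii) the integral identity producing the ${}_2F_1$, and (iii) justifying the three interchanges (sum over $l$ versus CCDF, derivative versus expectation, derivative versus outer integral over $r$); the first two of these should admit standard references, while the last reduces to verifying local uniform integrability of $e^{-\pi\lambda r^2(1+C(s))}$ in a neighborhood of $s=1$.
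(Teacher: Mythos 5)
Your proposal is correct and follows essentially the same route as the paper's proof: Gamma-mixture decomposition of $g_0$, the finite-sum CCDF for integer shape, the $\partial_s^n|_{s=1}$ trick to reduce $\E[Y^n e^{-Y}]$ to derivatives of $L_I$, the PGFL evaluation yielding the ${}_2F_1$, and the final Gaussian integral over $r$ collapsing via $\sum_q v_q = 1$. The only difference is cosmetic: you carry out the $r$-integration explicitly and note the interchange-of-limits justifications, which the paper leaves implicit.
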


\begin{proof}
Substituting for $\sir$ in (\ref{eqn:SIRccdf}), coverage probability
\begin{align}
P_{c} 
&=\int\limits_{0}^{\infty} \P(g_{0}>T I r^{\alpha} ) 2 \pi \lambda r e^{-\pi \lambda r^2} \d r.
\label{eqn:Pc}
\end{align}

As $f(g_0)$=$\sum\limits_{l=0}^{\infty} w_l \frac{e^{-c_0 g_0} g_0^{l+\mu_0-1} c_0^{l+\mu_0}}{\Gamma(l+\mu_0)}$ and using $Y$=$c_0 T I  r^{\alpha}$, 
\begin{align}
\P(g_{0}>T I r^{\alpha} ) &= \E_Y \left(\sum\limits_{l=0}^{\infty} w_l \frac{\Gamma(l+\mu_0,Y)}{\Gamma(l+\mu_0)} \right) \label{eqn:forthm3}\\
 &\stackrel{(a)}=  \E_Y \left(\sum\limits_{l=0}^{\infty} w_l \sum\limits_{n=0}^{l+\mu_0-1} e^{-Y}\frac{Y^n}{n!} \right)\\
&=  \sum\limits_{l=0}^{\infty} w_l \sum\limits_{n=0}^{l+\mu_0-1}   \frac{(-1)^n}{n!} \frac{\partial^n}{\partial s^n} L_Y(s)|_{s=1}.
\label{eqn:Pg0}
\end{align}
Since $\mu_0$ and $l$ are integers, $(a)$ follows from the fact that $\frac{\Gamma(q,Y)}{\Gamma(q)}$= $\sum\limits_{n=0}^{q-1} e^{-Y}\frac{Y^n}{n!}$ , for integer $q$.

\begin{equation}
L_Y(s) = \E(e^{-sY}) = \E(e^{-sc_0TIr^{\alpha}})= L_I(s c_0 T r^{\alpha}).
\label{eqn:Ly}
\end{equation}
\begin{align}
L_I(s) &\stackrel{(a)}= \exp \left(-2 \pi \lambda \int\limits_r^{\infty} (1-\E_g(\exp(-s g v^{-\alpha})))v \d v \right) \nonumber \\
&\stackrel{(b)} = e^{ -2 \pi \lambda \sum\limits_{q=0}^{\infty} v_q \int\limits_r^{\infty} (1-\frac{1}{(1+\frac{s v^{-\alpha}}{c_i})^{q+\mu_i}} )v \d v } \nonumber \\
&= e^{-2 \pi \lambda \sum\limits_{q=0}^{\infty} v_q (\frac{r^2}{2} ({}_2F_1(q+\mu_i,-\frac{2}{\alpha},1-\frac{2}{\alpha},-\frac{r^{-\alpha}s}{c_i})-1)) },
\label{eqn:Li}
\end{align}
(a) from \cite{andrews2011tractable} , (b) as the $\pdf$ of interfering signal can be expressed as a weighted sum of Gamma density functions and the weights sum to 1.

Combining (\ref{eqn:Pc}), (\ref{eqn:Pg0}), (\ref{eqn:Ly}), (\ref{eqn:Li}), and by using the fact that the  weights $v_q$ sum to 1, $P_c$ is
\begin{align*}
 \sum\limits_{l=0}^{\infty}  \sum\limits_{n=0}^{l+\mu_0-1} \frac{\partial^n}{\partial s^n} \int\limits_{0}^{\infty} \frac{2 \pi \lambda r  w_l(-1)^n}{n! e^{\pi \lambda r^2 \sum\limits_{q=0}^{\infty} v_q {}_2F_1(q+\mu_i,-\frac{2}{\alpha},1-\frac{2}{\alpha},-\frac{s T c_0}{c_i}) }} \d r |_{s=1}.
\end{align*}

\end{proof}

In practice, only a few weights $w_l$ in (\ref{eqn:weights}) are significant as sum of the weights can be bounded as shown below. From (\ref{eqn:weights}) 
 \begin{align*}
\sum\limits_{l=N+1}^{\infty} w_l &=  \frac{\left(\frac{m}{m+ \kappa \mu} \right)^m }{\Gamma(m)} \sum\limits_{j=0}^{\infty} \frac{\Gamma(m+N+1+j)(\kappa \mu)^{j+N+1}}{\Gamma(2+N+j) (\kappa \mu +  m)^{j+N+1}}  \\ 
&\stackrel{(a)}\approx \frac{\left(\frac{m}{m+ \kappa \mu} \right)^m (N+1)^{m-1} }{\Gamma(m) (\frac{\kappa \mu}{m+\kappa \mu})^{-N-1}} \sum\limits_{j=0}^{\infty} \frac{(\frac{N+1+j+\frac{m}{2}}{N+1})^{m-1}}{ (\frac{\kappa \mu}{m+  \kappa \mu})^{-j}}\\
& \leq   \frac{\left(\frac{m}{m+ \kappa \mu} \right)^m (N+1)^{m-1} }{\Gamma(m) (\frac{m+\kappa \mu}{\kappa \mu})^{N+1}} \sum\limits_{j=0}^{\infty} \frac{(1+j+\frac{m}{2})^{m-1} }{ (\frac{\kappa \mu}{m+  \kappa \mu})^{-j}}\\
& \leq e^{-\mathcal{O}(N)},
\end{align*}
(a) uses Kershaw's approximation, $\frac{\Gamma(k+\frac{\alpha}{2})}{\Gamma(k)}=(k+\frac{\alpha}{4}-\frac{1}{2})^{\frac{\alpha}{2}}$.

The higher order derivatives in \eqref{eqn:Pcexact} is evaluated using Fa{\`a} di Bruno's formula \cite{faa}, \ie, $\frac{\partial^n}{\partial s^n} f(g(s))$=  
$  \sum\limits_{k=1}^n f^{(k)}(g(s)) B_{(n,k)}(g^{(1)}(s), g^{(2)}(s),..,g^{(n-k+1)}(s)) ,$
where $f^{(k)}$, $g^{(k)}$ are the $k$th order derivatives and $B_{n,k}$ is the Bell polynomial. In \eqref{eqn:Pcexact}, $f(g(s))$ is of the form $\frac{1}{g(s)}$. Hence $f^{(k)}(g(s))$ is $(-1)^k k! g(s)^{-k-1}$ and $g^{(k)}(s)$ is
$\sum\limits_{q=0}^{\infty}  \frac{v_q (q+\mu_i)_k (-\frac{2}{\alpha})_k}{(-1)^{-k}(1-\frac{2}{\alpha})_k} {}_2F_1(q+\mu_i+k, -\frac{2}{\alpha}+k,1-\frac{2}{\alpha}+k,-\frac{sTc_0}{c_i})$
where $(.)_k$ is the Pochhammer symbol, \ie, $(a)_k =\frac{\Gamma(a+k)}{\Gamma(a)}$.

\subsection{Non-integer $\mu_0$}
If $\mu_0$ is not an integer, the distribution of $\sir$ can be expressed in terms of fractional derivatives of Laplace transform of interference which leads to intractable expressions. Another approach is to express each of the weighted Gamma $\pdf$ in turn as a weighted sum of Erlang $\pdf$ (Erlang is a special case of Gamma $\pdf$ with integer shape parameters). The parameters of the Erlang density functions and weights can be obtained through a numerical iterative expectation maximization procedure \cite{thummler2006novel}. Alternatively we come up with a technique to approximate the $\pdf$ of Gamma distribution of non integer shape parameters as a weighted sum of Erlang $\pdf$ using a Rician approximation of the Nakagami distribution. The Rician (then called as Nakagami-n) approximation of Nakagami-m distribution was proposed by Nakagami in \cite{nakagami} and has been widely used in wireless communication. The advantage of this method described below is that the weights and Erlang parameters can be pre-computed.

\begin{itemize}
\item{Square root of a Gamma distributed random variable with shape and scale parameters ($l+\mu_0$,$\frac{1}{c_0}$) is Nakagami-m distributed with shape and scale parameters ($l+\mu_0$,$\frac{l+\mu_0}{c_0}$).} \item{Nakagami-m random variable with parameters ($l+\mu_0$, $\frac{l+\mu_0}{c_0}$) can be approximated  by Rician distribution with parameters ($K_l$, $\frac{l+\mu_0}{c_0}$) through moment matching where $l+\mu_0=\frac{(K_l+1)^2}{2 K_l+1},$ $\forall$ $l+\mu_0 \geq 1$ \cite{nakagami}. The original distribution and the approximate distributions are plotted in Fig. \ref{fig:nakagamirician} and it can be observed that the approximation is very tight.} 
\item{ Rician fading is a special case of $\kappa$-$\mu$ shadowed fading with $\mu=1$, $\kappa=K_l$, $m \rightarrow \infty$ \cite{paris2014statistical}. So the $\pdf$ of power of a  Rician faded channel can be expressed as a weighted sum of Erlang $\pdf$ (as $\mu$ is an integer). Hence using this approximate equivalence, Gamma density of non-integer shape parameter can be expressed as a weighted sum of Erlang $\pdf$. }
\end{itemize}
So $f(g_0)$=$\sum\limits_{l=0}^{\infty} w_l \frac{e^{-c_0 g_0} g_0^{l+\mu_0-1} c_0^{l+\mu_0}}{\Gamma(l+\mu_0)}$ can be expressed as $f(g_0) \approx \sum\limits_{l=0}^{\infty} \sum\limits_{p=0}^{\infty} w_l \omega_{pl} \frac{e^{-c_l g_0} g_0^{p} c_l^{p+1}}{\Gamma(p+1)},$ where $\omega_{pl}$=$\frac{e^{-K_l} K_l^p}{p!},$ $c_l=\frac{1+K_l}{\Omega_l}, \Omega_l=\frac{l+\mu_0}{c_0},$ $K_l=l+\mu_0-1+ \sqrt{(l+\mu_0)(l+\mu_0-1)}.$  By following the same steps as in Theorem 1, if $\mu_0$ is not an integer and is greater than 1, then the coverage probability is approximately
\begin{equation}
\sum\limits_{l=0}^{\infty} \sum\limits_{p=0}^{\infty}  \sum\limits_{n=0}^{p}  \frac{\partial^n}{\partial s^n} \frac{w_l  \omega_{pl} (-1)^n}{n! \sum\limits_{q=0}^{\infty} v_q {}_2F_1(q+\mu_i,-\frac{2}{\alpha},1-\frac{2}{\alpha},-\frac{s T c_l}{ c_i})}|_{s=1}.
 \label{eqn:Pcapprox}
\end{equation}

\begin{figure}[ht]
\centering
\includegraphics[height=2.4in,width=3.75in]{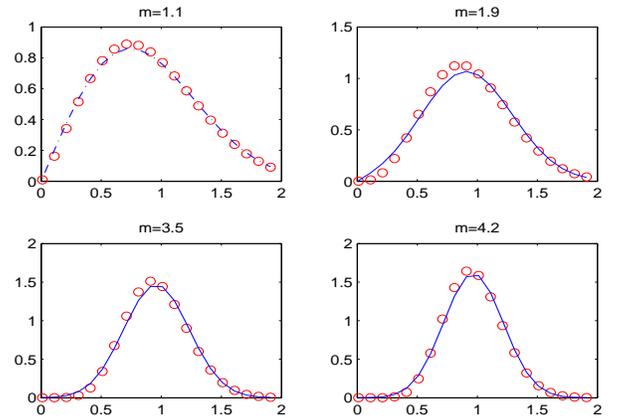}
\caption{Nakagami-$m$ pdf marked by  circles and its Rician approximation.  }
\label{fig:nakagamirician}
\end{figure}%

\section{Numerical Results}
\label{sec:fading}

The results are plotted  for unit mean power in both the desired and interfering channels. We assume identical and independent fading distribution in the desired and interferer links. The coverage probability plots for different fading distributions are provided in Fig. \ref{fig:simulation}. To calculate coverage probability only a finite number of weights $N$ are required  and are provided in  Fig. \ref{fig:simulation}. We observe that simulation results matches closely with the coverage probability derived.  From the plots we can see that in $\kappa$-$\mu$ shadowed fading when $\kappa$ or $\mu$ or $m$ increases, coverage probability increases.   From Fig. \ref{fig:nakagami}, we observe that the Rician approximation of Nakagami (which is used when $\mu$ is a non-integer) is very tight and the accuracy of approximate coverage probability increases with $\sir$ threshold. As the exact coverage probability is not known when $\mu$ is a non-integer, we compute the squared error between the exact and approximate coverage probability when $\mu$ is an integer. As the coverage probability expressions involve  multiple derivatives and summations, deriving an analytical upper bound on the  approximation error is complicated. Hence in Fig. \ref{fig:error}, we plot the squared error for different fading distributions. We observe that as the $\sir$ threshold $T$ increases or with decrease in Nakagami fading or with decrease in $\kappa$, the squared error decreases and is also very low (order of $10^{-5}$).

\begin{figure}[ht]
\centering
\includegraphics[height=2.65in,width=3.75in]{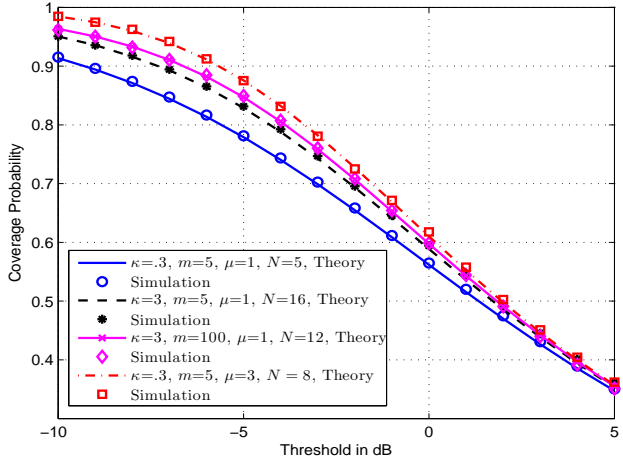}
\caption{Theoretical and simulated coverage probability with $\kappa$-$\mu$ shadowed fading}
\label{fig:simulation}
\end{figure}%

\begin{figure}[ht]
\centering
\includegraphics[height=2.65in,width=3.75in]{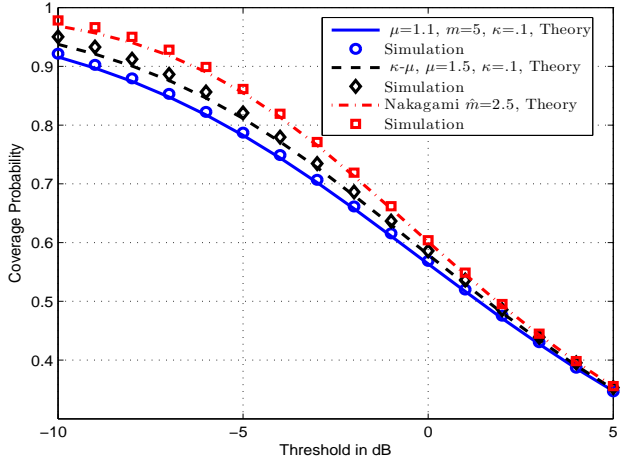}
\caption{Theoretical and simulated coverage probability when $\mu$ is a non-integer. Nakagami-m fading of parameter $\hat{m}$ is a special case of $\kappa$-$\mu$ shadowed fading for $\mu$=$\hat{m}$, $\kappa \rightarrow 0$, $m \rightarrow \infty$ and $\kappa$-$\mu$ fading is a special case of $\kappa$-$\mu$ shadowed fading when $m \rightarrow \infty$   }
\label{fig:nakagami}
\end{figure}%

\begin{figure}[ht]
\centering
\includegraphics[height=2.65in,width=3.75in]{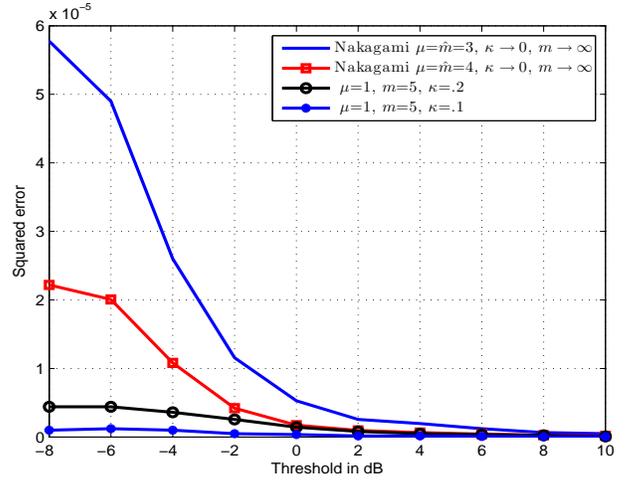}
\caption{Squared error between exact and approximate coverage probability}
\label{fig:error}
\end{figure}%

\section{Conclusion}
In this paper, we have derived the coverage probability when both the desired and interfering links experience $\kappa$-$\mu$ shadowed fading. As $\kappa$-$\mu$ shadowed fading generalizes many popular fading distributions, the coverage probability expression derived can be used when the links experience Rician fading, Nakagami fading, Rician shadowing etc. which were hitherto unknown. By using a Rician approximation, we also derive an approximate coverage probability expression when parameter $\mu$ is not an integer. This is useful in deriving the coverage probability when the shape parameter of Nakagami fading is not an  integer.

\end{document}